\newtheorem{thm}{Theorem}\crefname{thm}{Theorem}{Theorems}
\newtheorem{lem}[thm]{Lemma}\crefname{lem}{Lemma}{Lemmas}
\crefname{cor}{Corollary}{Corollaries}
\crefname{dfn}{Definition}{Definitions}
\DeclareMathOperator{\tr}{tr}
\DeclareMathOperator{\GHZ}{GHZ}
\DeclareMathOperator{\Cliff}{Cliff}
\DeclareMathOperator{\Stab}{Stab}
\DeclareMathOperator{\GL}{GL}
\DeclareMathOperator{\Sp}{Sp}
\newcommand{\CC}{\mathbb C}
\newcommand{\FF}{\mathbb F}
\newcommand{\ot}{\otimes}
\newcommand{\ssection}[1]{\smallskip\phantomsection\addcontentsline{toc}{section}{#1}\textit{#1.---}}
\begin{document}

\title{Multipartite Entanglement in Stabilizer Tensor Networks}
\author{Sepehr Nezami}
\affiliation{Stanford Institute for Theoretical Physics, Stanford University, Stanford, USA}
\affiliation{Institute for Quantum Information and Matter and Walter Burke Institute for Theoretical Physics, California Institute of Technology, Pasadena CA 91125, USA}
\author{Michael Walter}
\affiliation{Stanford Institute for Theoretical Physics, Stanford University, Stanford, USA}
\affiliation{Korteweg-de Vries Institute for Mathematics, Institute for Theoretical Physics, Institute for Logic, Language, and Computation, and QuSoft, University of Amsterdam, The Netherlands}
\begin{abstract}
Despite the fundamental importance of quantum entanglement in many-body systems, our understanding is mostly limited to bipartite situations.
Indeed, even defining appropriate notions of multipartite entanglement is a significant challenge for general quantum systems.
In this work, we initiate the study of multipartite entanglement in a rich, yet tractable class of quantum states called \emph{stabilizer tensor networks}.
We demonstrate that, for generic stabilizer tensor networks, the \emph{geometry} of the tensor network informs the multipartite entanglement structure of the state.
In particular, we show that the average number of Greenberger-Horne-Zeilinger (GHZ) triples that can be extracted from a stabilizer tensor network is small, implying that tripartite entanglement is scarce.
This, in turn, restricts the higher-partite entanglement structure of the states.
Recent research in quantum gravity found that stabilizer tensor networks reproduce important structural features of the AdS/CFT correspondence, including the Ryu-Takayanagi formula for the entanglement entropy and certain quantum error correction properties.
Our results imply a new operational interpretation of the monogamy of the Ryu-Takayanagi mutual information and an entropic diagnostic for higher-partite entanglement.
Our technical contributions include a spin model for evaluating the average GHZ content of stabilizer tensor networks, as well as a novel formula for the third moment of random stabilizer states, which we expect to find further applications in quantum information.
\end{abstract}
\maketitle

Quantum entanglement is of fundamental relevance for the behavior of quantum mechanical systems in condensed matter and high energy physics.
From the perspective of quantum information processing, it is the resource that provides speedups in quantum computing, security in quantum cryptography, and improved performance in quantum sensing.
However, the structure of many-body or multipartite entanglement is only poorly understood~\cite{walter2016multipartite}.
In this work, we focus on analyzing multipartite entanglement in an important but tractable class of quantum states known as \emph{stabilizer tensor networks}, i.e., tensor networks that are obtained by contracting stabilizer states.
Stabilizer states are an important family of quantum states that can be highly entangled (even maximally so) but still have sufficient algebraic structure to admit an efficient classical description.
This makes them a versatile tool in quantum information theory, particularly in the theory of quantum error correction~\cite{gottesman96stabilizer}.
Of particular import in the present context is that the tripartite entanglement structure of stabilizer states can be precisely quantified -- any tripartite stabilizer state is locally equivalent to a collection of 
bipartite Bell pairs and tripartite GHZ states~\cite{bravyi2006ghz,looi2011tripartite} (cf.~\cite{hein2004multiparty,van2005invariants,van2005local,hein2006entanglement,smith2006typical,dahlsten2006exact,plato2008random}).

An important additional motivation to study stabilizer tensor networks comes from current research in quantum gravity.
In recent years, research in quantum gravity and quantum information theory has been inspired by a fruitful mutual exchange of ideas.
Tensor networks in particular provide a common framework, rooted in the similarity between the structure of the tensor network and the bulk geometry in holographic duality~\cite{swingle2012constructing,swingle2012entanglement,hartman2013time}.
A paradigmatic example is the Ryu-Takayanagui formula, $S(A) \simeq \lvert\gamma_A\rvert/4G_N$, which asserts that the entanglement entropy of a boundary region $A$ in a holographic state is in leading order proportional to the area of a corresponding minimal surface $\gamma_A$ in the bulk geometry~\cite{ryu2006holographic,lewkowycz2013generalized}.
Likewise, in any tensor network, the entanglement entropy of a boundary subsystem can be upper-bounded in terms of the size of a minimal cut through the network~\cite{orus2014practical} (\cref{fig:tns-and-rt}).
This bound can be saturated not only through the choice of suitable tensors~\cite{pastawski2015holographic,yang2015bidirectional} but is in fact a generic phenomenon in \emph{random tensor networks} with large bond dimension~\cite{hayden2016holographic,hastings2016asymptotics}, the mechanism of which can be understood in terms of multipartite entanglement distillation.
These tensor network models not only reproduce the Ryu-Takayanagi formula for the entanglement entropy, but they also implement several other significant features of holographic duality~\cite{pastawski2015holographic,yang2015bidirectional,hayden2016holographic}.
In many ways, these properties follow from the bipartite entanglement structure and can be therefore reduced to entropic considerations.

\begin{figure}
\includegraphics[height=3cm]{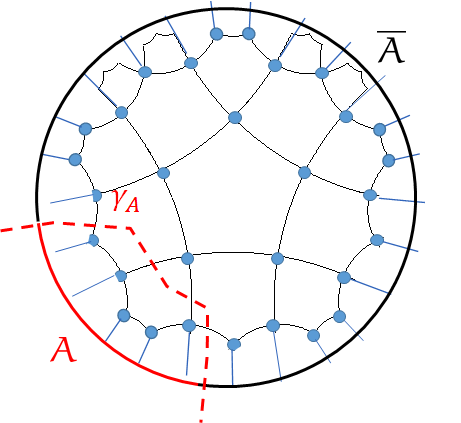}
\caption{\textbf{Stabilizer tensor networks.} A tensor network state is obtained by placing random stabilizer states at the bulk vertices (blue) and contracting according to the edges of the graph. In the limit of large bond dimensions, the average entanglement entropy of a boundary region $A$ is proportional to the length of a minimal cut $\gamma_A$ through the network (dashed line)~\cite{hayden2016holographic}, $S(A) \simeq S_{RT}(A)$, reproducing the Ryu-Takayanagi formula in holography.}
\label{fig:tns-and-rt}
\end{figure}

In this paper, we initiate a study of \emph{multipartite entanglement} in random tensor network models.
Our motivation is twofold:
First, recent research in quantum gravity has raised profound questions regarding the multipartite entanglement in holographic states~\cite{hayden2013holographic,balasubramanian2014multiboundary,marolf2015hot}, in particular with regards to tripartite entanglement of GHZ type~\cite{susskind2014erepr,susskind2016copenhagen}.
Answers to these questions in the context of tensor network models will likely lead to new diagnostics applicable in holography.
Second, we seek to understand the general mechanisms by which quantum information is encoded in tensor networks; an improved understanding of the entanglement structure may inform the design of tensor networks that adequately represent the physics.
While it is possible to obtain partial information from the entanglement entropy of subsystems~\cite{hayden2013holographic,balasubramanian2014multiboundary,gross2013stabilizer,bao2015holographic,cao2015holographic}, many basic questions regarding the multipartite entanglement cannot be answered from entropic data.
A striking example is that a pair of GHZ states cannot be entropically distinguished from three Bell pairs, even though their entanglement properties are vastly different~\cite{greenberger1989going}.

\ssection{Summary of results}%
Our main result is that the average amount of tripartite entanglement in random stabilizer networks is small. More precisely, for any tripartition the expected number of GHZ triples remains bounded as we take the limit of large bond dimensions (\cref{thm:main}).
This has a number of surprising consequences on the correlation and entanglement structure:
(a) The number of Bell pairs that can be extracted between two subsystems $A$ and $B$ is roughly half the mutual information $I(A:B)$ (which in turn can be read off the geometry of the network using the Ryu-Takayanagi formula);
(b) in particular, the mutual information measures quantum entanglement, proving a conjecture in~\cite{hayden2013holographic} for stabilizer tensor networks;
(c) the monogamy of the mutual information, $I(A:B)+I(A:C)\leq I(A:BC)$, established in~\cite{hayden2013holographic} for holographic entropies, thus acquires an operational interpretation as originating from the monogamy of quantum entanglement; 
(d) the \emph{tripartite information} $I_3 := I(A:B)+I(A:C)-I(A:BC)$ (i.e., the difference in the above inequality) provides a diagnostic for fourpartite entanglement; in fact, after extracting all Bell pairs we obtain a residual fourpartite entangled state with the entropies of a \emph{perfect tensor} of size $-I_3/2$~\cite{pastawski2015holographic}, strengthening the picture provided by the holographic entropy cone~\cite{bao2015holographic} (\cref{fig:entanglement}).

We establish these results based on two main technical contributions:
First, we diagnose the GHZ content by a polynomial invariant (the third moment of the partial transpose $\rho_{AB}^{T_B}$). Its average can be evaluated using a classical ferromagnetic spin model, the \emph{GHZ spin model}. For large bond dimensions, this model is in its low-temperature (ordered) phase and hence the tripartite entanglement is determined by its minimal energy configurations (\cref{fig:tripartite}).
Second, we derive a novel formula for the third moment of non-qubit stabilizer states.
It refines the results of~\cite{kueng2015qubit,zhu2015multiqubit,webb2015clifford} and we expect that it will be of similar interest in quantum information theory.
Throughout this article, we measure entropies of $p$-level systems in units of $\log_p$ bits.

\ssection{Random stabilizer networks}%
We now describe the random stabilizer network model.
Consider a connected graph with vertices $V$ and edges $E$ (parallel edges allowed).
Let $V_\partial$ denote a subset of the vertices, which we will refer to as the \emph{boundary vertices}; all other vertices are called \emph{bulk vertices} and denoted by $V_b$. 
Given a choice of bond dimensions for all edges, we define a pure quantum state by placing tensors $\ket{V_x}$ at the bulk vertices and contracting according to the edges:
\begin{equation}
\label{eq:tns}
  \ket\Psi = \left( \bigotimes_{x\in V_b} \bra{V_x} \right) \left( \bigotimes_{e\in E} \ket{e} \right)
\end{equation}
Here, $\ket{e} \propto \sum_i\ket{ii}$ denotes a normalized maximally entangled state corresponding to an edge $e$.
The state $\ket\Psi$ is a tensor network state defined on the Hilbert space corresponding to the boundary vertices $V_\partial$, and in general unnormalized.
We write $\rho=\Psi/\!\tr\Psi$ for the normalized density matrix, where $\Psi=\ket\Psi\!\!\bra\Psi$. See \cref{fig:tns-and-rt} for an illustration.

To build a stabilizer tensor network state, we choose bond dimensions of the form $D = p^N$, where $p$ is a fixed prime and $N$ some positive integer that we will later choose to be large (for simplicity of exposition, we choose all bond dimensions to be the same).
Thus the Hilbert space associated with a single vertex is of dimension $D_x = p^{N \deg(x)}$, where $\deg(x)$ denotes the degree of the vertex (i.e., the number of incident edges), and the Hilbert spaces associated with the bulk vertices has dimension $D_b = p^{N_b}$, 
where $N_b = N \sum_{x \in V_b} \deg(x)$. 
We now select each vertex tensor $V_x$ in~\eqref{eq:tns} independently and uniformly at random from the set of stabilizer states.
Thus $\Psi$ is obtained by partially projecting one stabilizer state onto another (viz., the random vertex tensors onto the maximally entangled pairs), which implies that either $\Psi$ is zero or again a stabilizer state.
In the latter case, which occurs with high probability for large $N$, we say that $\Psi$ is a \emph{random stabilizer tensor network state}.
In any tensor network state, the entanglement entropy $S(A) = -\tr\rho_A\log_p\rho_A$ of a boundary subsystem $A\subseteq V_\partial$ can always be upper bounded by
$S_{RT}(A) := N \min \lvert\gamma_A\rvert$~\cite{evenbly2011tensor},
where we minimize over all cuts $\gamma_A$ that separate the subsystem $A$ from its complement $\bar A$ in $V_\partial$ (\cref{fig:tns-and-rt}).
Formally, such a cut is defined by a subset of vertices $V_A$ that contains precisely those boundary vertices that are in $A$ 
such that the set of edges that leaves $V_A$ is $\gamma_A$.

The fundamental property of random tensor networks 
is that in the limit of large $N$ (or large $p$), this upper bound becomes saturated~\cite{hayden2016holographic}.
Thus these models reproduce the Ryu-Takayanagi formula in holography. 
More precisely, the average entanglement entropy of a boundary subsystem, conditioned on the tensor network state being nonzero, 
is given by
\begin{equation}
\label{eq:rt}
  \braket{S(A)}_{\neq0} \simeq S_{RT}(A).
\end{equation}
Here and in the following, we write $\simeq$ for equality up to order $O(1)$, independent of $N$.
The central fact used to derive this is that random stabilizer states form a projective 2-design~\cite{klappenecker2005mutually,gross2007evenly}, i.e., that their first and second moments agree with the Haar measure.
For the reader's convenience, and since the derivation in~\cite{hayden2016holographic} focused on the case of large $p$, we give a succinct derivation in~\cite{sm}\nocite{gross2006hudson}.
This result can be strengthened to show that in fact $S(A) \simeq S_{RT}(A)$ with high probability~\cite{hayden2016holographic}.

\ssection{Tripartite entanglement}%
Any pure tripartite stabilizer state $\rho_{ABC}$ is locally equivalent to a tensor product of bipartite maximally entangled states, $\ket{\Phi^+}_{AB} \propto \sum_{i=1}^p \ket{ii}$ etc., and tripartite GHZ states $\ket{\GHZ}_{ABC} \propto \sum_{i=1}^p \ket{iii}$~\cite{bravyi2006ghz,looi2011tripartite}.
That is, there exists a local unitary $U = U_A\ot U_B\ot U_C$ such that $U\rho_{ABC}U^\dagger$ is equal to
\begin{align}
\label{eq:normal form}
  (\Phi^+_{AB})^{\ot c} \ot (\Phi^+_{AC})^{\ot b} \ot (\Phi^+_{BC})^{\ot a} \ot \GHZ_{ABC}^{\ot g}
\end{align}
(we suppress local states on $A$, $B$ and $C$ which do not impact the entanglement).
The integers $a,b,c,g\geq0$ are uniquely determined; thus they meaningfully characterize the bipartite and tripartite entanglement between subsystems $A$, $B$ and $C$.
Now we can state our main result:

\begin{figure}
\raisebox{1.2cm}{(a)}
\raisebox{-0.15cm}{\includegraphics[height=3.2cm]{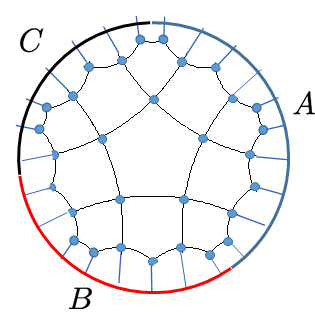}}\quad
\raisebox{1.2cm}{(b)~~}
\includegraphics[height=3cm]{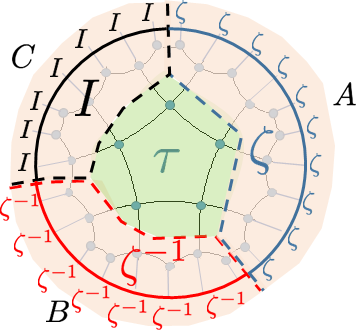}
\caption{\textbf{Tripartite entanglement and the GHZ spin model.} (a)~Tripartition of the boundary. (b)~Illustration of the spin model (with boundary conditions and minimal energy configuration) used to evaluate the GHZ content of a random stabilizer tensor network state.}
\label{fig:tripartite}
\end{figure}

\begin{thm}[Tripartite entanglement in random stabilizer networks]
\label{thm:main}
  Let $A$, $B$, $C$ denote a tripartition of the boundary (\cref{fig:tripartite}~(a)), and $p\equiv2\pmod3$.
  Then the expected number of GHZ states in a random stabilizer network is of order $O(1)$ in the limit of large $N$. 

  Explicitly, we have the following bound in terms of the geometry of the tensor network:
  \[ \braket{g}_{\neq 0} \leq \#_b \log_p(p+1) + \log_p(\#_A\#_B\#_C) + 4\delta, \]
  with $\#_A$ the number of minimal cuts for $A$, etc.,
  $\#_b$ the maximal number of components of any subgraph obtained by removing minimal cuts for $A$, $B$ $C$~\footnote{In the language of~\cite{pastawski2015holographic}, $\#_b$ is the number of \emph{multipartite residual regions}.},
  and $\delta=(2p+2)^{V_b}/p^N$.
\end{thm}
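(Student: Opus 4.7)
The plan is to bound $\braket{g}_{\neq 0}$ by constructing a third-moment observable that isolates the GHZ content, averaging it to a classical partition function on the graph, and analysing that model in its low-temperature regime.

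First I would identify an algebraic upper bound for $p^g$ in terms of third-order moments of $\rho_{ABC}$. No linear combination of one- and two-body entropies can extract $g$ from the normal form~\eqref{eq:normal form}---the triple $\{S(A),S(B),S(C)\}$ determines only $a+g,\,b+g,\,c+g$ subject to a sum rule---so one must pass to the third moment. For suitable permutations $\sigma_A,\sigma_B,\sigma_C\in S_3$ the trace $\tr\bigl[(T_{\sigma_A}\ot T_{\sigma_B}\ot T_{\sigma_C})\,\rho_{ABC}^{\ot 3}\bigr]$ evaluated on~\eqref{eq:normal form} factors into a Bell-pair contribution and a GHZ contribution with different exponents, so that a fixed combination of such traces upper bounds $p^g$ by an object of the form $\tr[\Sigma\,\Psi^{\ot 3}]/\tr(\Psi)^3$, with $\Sigma$ a tensor product of permutations determined by the tripartition $(A,B,C)$.

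Next I would evaluate the expectation vertex by vertex. Applying the explicit third-moment formula for random stabilizer states at each bulk vertex $x$ produces a sum over ``spin'' labels $s_x$ drawn from a finite set $\Sigma_{\text{spin}}$ that parametrises the orthogonal basis of stabilizer third-moment operators; the maximally entangled edge states then contract these into ferromagnetic Gram-matrix weights $w(s_x,s_y)$. The net effect is to rewrite
\[
  \EE\bigl[\tr(\Sigma\,\Psi^{\ot 3})\bigr]
    = \sum_{s\colon V\to\Sigma_{\text{spin}}} \prod_{e=(x,y)\in E} w(s_x,s_y),
\]
with boundary spins fixed by the tripartition and $\Sigma$. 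For large $N$ the weights are sharply peaked at agreeing spins, placing the model in its ordered phase. I expect the leading contribution to come from configurations that assign each bulk vertex one of three ``colours'' inherited from $A,B,C$, with the domain walls realising simultaneous cuts $\gamma_A,\gamma_B,\gamma_C$ of minimum length; a counting argument bounds the number of such ground states by $\#_A\#_B\#_C$, while each residual bulk region admits at most $p+1$ subleading stabilizer-spin refinements, producing the factor $(p+1)^{\#_b}$. Passing back via $\braket{g}_{\neq 0}\leq\log_p\braket{p^g}_{\neq 0}$ (Jensen) and controlling the $\Psi\neq 0$ normalisation with the same second-moment bound used for~\eqref{eq:rt} yields the announced inequality, with the $4\delta$ absorbing these corrections and the excited-state contributions.

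The main obstacle is the third-moment step. For $p\geq 3$ the stabilizer ensemble is a projective 3-design, but for $p=2$ one must compute the third moment from scratch and verify that the resulting edge weights are genuinely ferromagnetic with the tractable ground-state structure assumed above. Identifying the correct state space $\Sigma_{\text{spin}}$ and the combination $\Sigma$ of boundary permutations that cleanly exposes $p^g$---rather than as an opaque sum over many permutation traces---is where the heart of the argument lies; once this is in hand, the remainder reduces to a combinatorial analysis of minimal cuts that parallels the second-moment derivation of~\eqref{eq:rt}.
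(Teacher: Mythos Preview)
Your overall plan matches the paper's strategy, but there is one outright factual error and one structural gap in the spin-model analysis.

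\textbf{The 3-design statement is reversed.} You write that for $p\geq 3$ the stabilizer ensemble is a projective 3-design while for $p=2$ the third moment must be computed from scratch. It is the other way round: multiqubit stabilizer states \emph{do} form a projective 3-design (this is the Zhu/Kueng/Webb result the paper invokes), so for $p=2$ the third moment is simply the symmetric-group sum $\sum_{\pi\in S_3}R(\pi)$ up to normalisation. For odd primes $p$ the ensemble is \emph{not} a 3-design, and the paper develops precisely the new formula~\eqref{eq:third}, summing over the $2p+2$ Lagrangian stochastic subspaces $\Sigma_3(p)\supsetneq S_3$, to handle this case. Getting this backwards would leave you with the wrong spin set $\Sigma_{\text{spin}}$ for every odd $p$.

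\textbf{The ground-state picture is incomplete.} You describe the dominant configurations as three-colourings by the labels inherited from $A,B,C$, with the $(p+1)^{\#_b}$ factor arising from ``subleading stabilizer-spin refinements'' of residual regions. In the paper's analysis this is not a refinement but the mechanism that \emph{achieves} the minimum energy: the boundary conditions $\zeta,\zeta^{-1},1$ are all even, so any domain wall between two of them costs $d=2$; the minimum $E_0=(S_{RT}(A)+S_{RT}(B)+S_{RT}(C))/N$ is attained only when the residual bulk region $V'$ left over after removing the three minimal cuts is assigned an \emph{odd} element of $\Sigma_3(p)$ (a transposition when $p=2$), so that every edge leaving $V'$ costs exactly $d=1$. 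The $p+1$ odd elements per connected component of $V'$ are the source of the degeneracy bound $(p+1)^{\#_b}$. Without this even/odd accounting you cannot show that $E_0$ is actually attained, nor correctly enumerate the minimisers.

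On a smaller point: the paper does not merely upper-bound $p^g$ by a third-moment observable but uses the exact identity $g = S(A)+S(B)+S(C)+\log_p\tr(\rho_{AB}^{T_B})^3$, obtained from the partial transpose. This is the clean choice of $\Sigma$ you are looking for: it corresponds to $(\sigma_A,\sigma_B,\sigma_C)=(\zeta,\zeta^{-1},1)$.
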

In most cases of interest, the minimal cuts are unique and there remains a single connected component after their removal, so that $\braket{g}_{\neq0} \leq \log_p(p+1) + 4\delta$~\footnote{Note that $S(A)+S(B)+S(C)=2(a+b+c)+3g$. It follows that if the sum of local entropies is odd then, necessarily, $g>0$. This is all that can be said about the tripartite entanglement from the knowledge of the entropies alone, and it justifies that the upper bound in \cref{thm:main} is never smaller than $\log_p(p+1)\geq1$, 
even when the minimal cuts are unique.}.
We note that Markov's inequality implies that the number of GHZ triples in fact remains bounded with high probability.
\Cref{thm:main} vastly generalizes the bound in \cite{smith2006typical}, which can be obtained as the special case for a graph with a \emph{single} bulk vertex.

In general, the mutual information is sensitive to both classical and quantum correlations.
For a general stabilizer state of the form~\eqref{eq:normal form}, $I(A:B) = 2c+g$, where $c$ is the number of maximally entangled pairs and $g$ the number of GHZ triples (whose reduced state on AB is a classically correlated state). 
In random stabilizer networks, however, \cref{thm:main} shows that $\braket{g}_{\neq0}$ is bounded.
Thus the average number of maximally entangled pairs that can be extracted between $A$ and $B$ is roughly one half the mutual information, $I(A:B)/2\simeq c$, which in turn can be estimated from the geometry of the tensor network by using the Ryu-Takayanagi formula~\eqref{eq:rt}.
In particular, bipartite correlations between any two boundary subsystems are dominated by quantum entanglement and determined rigidly by the geometry of the tensor network, confirming a property that is also suspected to hold in holography~\cite{hayden2013holographic} (\cref{fig:entanglement}~(a)).


\begin{figure}
\raisebox{1.2cm}{(a)}
\raisebox{0.2cm}{\includegraphics[height=1.8cm]{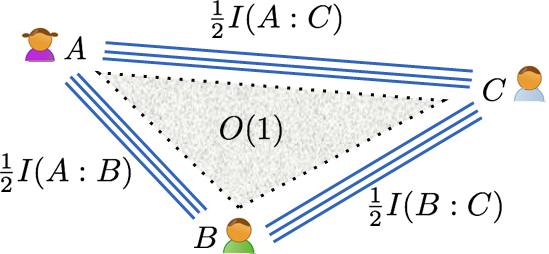}} \quad
\raisebox{1.2cm}{(b)}
\includegraphics[height=2.5cm]{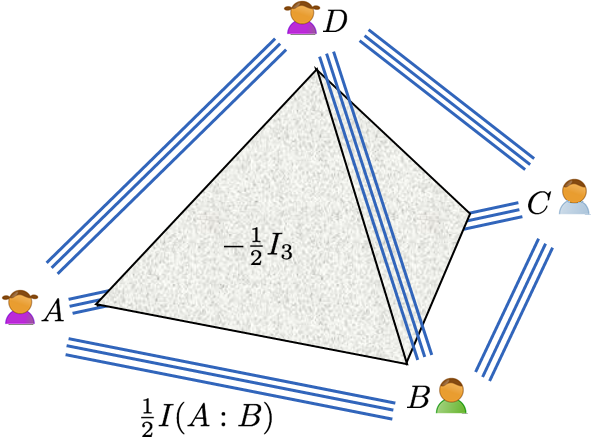}
\caption{\textbf{Multipartite entanglement structure.}
(a)~For any tripartition, there is only a bounded number of GHZ triples (dashed triangle) and hence the entanglement is dominated by bipartite maximal entanglement (blue lines).
(b)~For four (and more) parties, we can likewise extract maximally entangled pairs between any two parties (blue lines).
The residual state has approximately the entropies of a perfect tensor (tetrahedron).
This decomposition is in one-to-one correspondence with the extreme rays of the holographic entropy cone~\cite{bao2015holographic}.}
\label{fig:entanglement}
\end{figure}

\ssection{Higher-partite entanglement}%
\Cref{thm:main} has a number of remarkable consequences for the entanglement structure for four and more subsystems.
We first consider the extraction of bipartite entanglement.
Consider a random stabilizer tensor network state whose boundary is partitioned into $k$ subsystems $A_1, A_2, \dots, A_k$.
Applying the preceding discussion to $A=A_i$, $B=A_j$ and $C=\overline{A_iA_j}$ their complement, we find that the average number of maximally entangled pairs that can be extracted between any two subsystems $A_i$ and $A_j$ is $t_{ij}\simeq I(A_i:A_j)/2$.
The extraction process is implemented by local unitaries $U_i\ot U_j$; it leaves all other mutual informations invariant and does not introduce new GHZ triples.
We can therefore repeat the process and extract maximally entangled pairs between any pair of subsystems $A_i$ and $A_j$, until we obtain a residual state $\tilde\rho_{A_1\dots A_k}$ whose bipartite mutual informations $I(A_i:A_j)$ are all of order $O(1)$.

We now specialize the preceding discussion to a fourpartite system ($k=4$).
Here, the vanishing of the pairwise mutual informations implies that the entropies of the residual state will have the following simple form:
$S(A_i)\simeq \frac12 S(A_iA_j) \simeq m$ for all $i\neq j$, where $m\geq0$ is some integer~\cite{bao2015holographic}.
Ignoring the order-one corrections, stabilizer states with such entropies are fourpartite \emph{perfect tensors}.
These are tensors that are unitaries from any pair of subsystems to the complement, a crucial property used in the explicit construction of holographic codes~\cite{pastawski2015holographic,yang2015bidirectional}.
Significantly, it is possible to determine $m$ from the entropies of the original state, or, more specifically, from its \emph{tripartite information} $I_3 := I(A_1:A_2)+I(A_1:A_3)-I(A_1:A_2A_3)$, which is invariant under the extraction of the maximally entangled pairs (it also does not depend on the choice of $A_1,A_2,A_3$).
In short, we have established the following result:

\begin{thm}[Fourpartite entanglement in random stabilizer networks]
\label{thm:fourpartite}
  Let $A_1,\dots,A_4$ denote a partition of the boundary into four subsystems.
  Then the random stabilizer network state is locally equivalent to
  \begin{equation}
  \label{eq:fourpartite}
    \bigotimes_{i\neq j} (\Phi^+_{A_iA_j})^{t_{ij}} \ot \tilde\rho_{A_1A_2A_3A_4},
  \end{equation}
  In the limit of large $N$, on average $t_{ij} \simeq \frac12I(A_i:A_j)$ and the residual state $\tilde\rho$ has approximately the entropies of a perfect tensor of size $-I_3/2$ (that is, $S(A_i)\simeq S(A_iA_j)/2 \simeq -I_3/2$). 
\end{thm}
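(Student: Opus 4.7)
The plan is to reduce \cref{thm:fourpartite} to six applications of \cref{thm:main}, one for each unordered pair $(i,j)$, and then analyze the residual state using only entropy constraints. For a fixed pair $(i,j)$, view $\rho$ as a tripartite state with respect to the grouping $(A_i, A_j, \overline{A_iA_j})$. \Cref{thm:main} bounds the average GHZ content $g_{ij}$ of this tripartition by $O(1)$, and the tripartite normal form~\eqref{eq:normal form} supplies a local unitary $U_i\ot U_j\ot U_{\overline{A_iA_j}}$ exposing precisely $c_{ij}$ maximally entangled pairs between $A_i$ and $A_j$. Because $(\Phi^+_{A_iA_j})^{c_{ij}}$ lives entirely inside $A_i\ot A_j$, the factor $U_{\overline{A_iA_j}}$ commutes with it and may be absorbed into the residual, so the extraction is in fact realized by $U_i\ot U_j$ alone. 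From $I(A_i:A_j) = 2c_{ij} + g_{ij}$ we obtain $\braket{c_{ij}}_{\neq0}\simeq\tfrac12 I(A_i:A_j)$.

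The next step is to iterate this procedure over all six pairs. After extracting between $(A_1,A_2)$, repeat the argument on the residual for $(A_1,A_3)$. The new extraction unitary $U_1\ot U_3$ acts on $A_1$ only through the part not already consumed by the $A_1$--$A_2$ Bell pairs, and hence preserves them. More importantly, a unitary of the form $U_i\ot U_j$ is local with respect to \emph{every} tripartition into $(A_k,A_l,\overline{A_kA_l})$ --- regardless of whether $\{i,j\}$ and $\{k,l\}$ overlap --- so it preserves the GHZ content $g_{kl}$ and mutual information $I(A_k:A_l)$ of each such tripartition. The $O(1)$ bound on every $g_{ij}$ supplied by \cref{thm:main} thus carries through the iteration. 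After all six steps one obtains
\[ \rho \simeq \bigotimes_{i\neq j}(\Phi^+_{A_iA_j})^{t_{ij}} \ot \tilde\rho, \]
with $t_{ij}=c_{ij}(\rho)$ satisfying $\braket{t_{ij}}_{\neq0}\simeq\tfrac12 I(A_i:A_j)$, and a residual pure stabilizer state $\tilde\rho$ whose pairwise mutual informations $I(A_i:A_j)_{\tilde\rho}=g_{ij}$ are $O(1)$ on average.

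It remains to determine the entropies of $\tilde\rho$ and identify $m$. Writing $s_i=S(A_i)_{\tilde\rho}$, $s_{ij}=S(A_iA_j)_{\tilde\rho}$, purity gives $s_{ij}=s_{kl}$ whenever $\{i,j,k,l\}=\{1,2,3,4\}$, while the vanishing pairwise mutual informations yield $s_{ij}\simeq s_i+s_j$. Combining the three opposite-pair relations $s_1+s_2\simeq s_3+s_4$, $s_1+s_3\simeq s_2+s_4$, $s_1+s_4\simeq s_2+s_3$ forces all $s_i$ to agree up to $O(1)$; denote this common value by $m$, so that $s_{ij}\simeq 2m$ as claimed. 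A direct calculation shows that $I_3$ vanishes on each Bell pair $\Phi^+_{A_iA_j}$, so by additivity (and invariance of $I_3$ under the local extraction unitaries) $I_3(\rho)=I_3(\tilde\rho)\simeq 4m-3(2m)=-2m$, whence $m\simeq -I_3/2$.

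I expect the main obstacle to be the iteration step: one must verify that the normal-form unitary available at each subsequent stage can be chosen to act only on the portions of each $A_i$ not already used by previously extracted Bell pairs, so that the iteration genuinely produces a tensor product rather than reshuffling earlier factors. Once this is in place, the six $O(1)$ contributions from the iterated applications of \cref{thm:main} accumulate with constant multiplicity, and all averages in \cref{thm:fourpartite} follow.
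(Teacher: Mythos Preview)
Your proposal is correct and follows essentially the same route as the paper: apply \cref{thm:main} to each of the six tripartitions $(A_i,A_j,\overline{A_iA_j})$, extract the Bell pairs by local unitaries $U_i\ot U_j$ (which preserve all other pairwise mutual informations and GHZ contents), and then analyze the residual state via purity and the near-vanishing pairwise mutual informations to obtain the perfect-tensor entropies, recovering $m\simeq -I_3/2$ from the invariance of $I_3$ under Bell-pair extraction. The iteration concern you flag is handled exactly as you already indicate---the tripartite normal form of a tensor product factorizes, so the subsequent extraction unitaries can be taken to act only on the residual tensor factors of each $A_i$---and this is precisely the level of detail at which the paper leaves the argument.
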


Our result provides a new interpretation of the tripartite information $I_3$ for random stabilizer networks -- namely, as a measure of the entropy of the residual, genuinely fourpartite entangled state $\tilde\rho$.
Since entropies are always nonnegative, it follows that $I_3\lesssim0$; equivalently, the mutual information is monogamous, $I(A:B)+I(A:C)\lesssim I(A:BC)$, as was proved for holographic entropies in~\cite{hayden2013holographic}.
This can also be seen by observing that, in our setting, one half the mutual information is an entanglement measure; it is up to $O(1)$ corrections equal to, e.g., the squashed entanglement $E_{sq}$~\cite{christandl2004squashed}; 
therefore the monogamy of the mutual information also follows as a direct consequence of the monogamy of the latter.

It is also interesting to compare \cref{thm:fourpartite} with the classification of fourpartite holographic entropies in~\cite{bao2015holographic}.
We find that there is a one-to-one correspondence between the building blocks of fourpartite entanglement in~\eqref{eq:fourpartite} and the extreme rays of the fourpartite holographic entropy cone defined in~\cite{bao2015holographic}.
That is, the entropies of a four-partite holographic state can always be reproduced by states of the form~\eqref{eq:fourpartite} (up to rescaling).
\Cref{thm:fourpartite} elevates this result from the level of entropies to the level of quantum states for random stabilizer networks.
It is natural to ask if this correspondence can be extended to a higher number of parties, where the phase space of holographic entropies becomes significantly more complicated.

Lastly, we note that while many important many-body states are stabilizers (e.g., ground states of commuting Pauli Hamiltonians, such as the toric code and several fracton models~\cite{haah2013commuting}, as well as states in certain Chern-Simons theories~\cite{salton2017entanglement}), most states are far from this ensemble.
It is an interesting open problem to generalize our results to other scenarios.

\ssection{Method: The GHZ spin model}%
We now sketch the proof of \cref{thm:main}.
Previous works such as~\cite{smith2006typical} have calculated the GHZ content of multiqubit stabilizer states by using the algebraic formula from \cite{bravyi2006ghz} in terms of dimensions of co-local stabilizer subgroups.
Here, we proceed differently.
The idea is to use the partial transpose $\rho_{AB}^{T_B}$ of the reduced state, which is sensitive to bipartite entanglement.
A short calculation using~\eqref{eq:normal form} shows that $\tr (\rho_{AB}^{T_B})^3 = p^{-2(a+b+c+g)}$.
Thus the number of GHZ states contained in a tripartite stabilizer state can be computed as
\begin{equation}
\label{eq:ghz no}
g = S(A) + S(B) + S(C) + \log_p \tr (\rho_{AB}^{T_B})^3.
\end{equation}
In a random stabilizer network, we can upper-bound $S(A)\leq S_{RT}(A)$ etc., and we know from the preceding section that this bound is not too lose.
The main challenge is to upper-bound the expectation value $\braket{\tr (\Psi_{AB}^{T_B})^3}$, which is a third moment in the \emph{unnormalized} random tensor network state~\eqref{eq:tns}.
In general, it is well known that a mixed quantum state $\rho_{AB}$ has bipartite entanglement if $\rho_{AB}^{T_B}$ has negative eigenvalues, hence, moments of $\rho_{AB}^{T_B}$ should contain information about the multipartite entanglement of the global pure state.%
\footnote{Note that $\tr \rho_{AB}^{T_B} = 1$, and $\tr (\rho_{AB}^{T_B})^2$ is the purity of system $A$, so is an entropic measure. The third moment is the smallest moment containing nontrivial information about the multipartite entanglement.}
This connection is particularly sharp for stabilizer states through~\cref{eq:ghz no}, but we expect similar calculations to be informative for other ensembles of quantum states.

We start with the multiqubit case ($p=2$).
Only in this case, we can use the recent result that multiqubit stabilizers are projective 3-designs~\cite{zhu2015multiqubit,webb2015clifford}.
Thus we have that for each vertex tensor $\braket{\ket{V_x}\!\!\bra{V_x}^{\ot3}}=\sum_{\pi\in S_3} R_x(\pi) /D_x(D_x+1)(D_x+2)$, where we sum over all permutations $\pi\in S_3$ and write $R_x(\pi)$ for the corresponding permutation operator acting on three copies of the vertex Hilbert space.
Using the analogous notation, we find that $\tr (\Psi_{AB}^{T_B})^3 = \tr \Psi^{\ot 3} R_A(\zeta) R_B(\zeta^{-1})$, where $\zeta$ is the cyclic permutation that sends $1\mapsto2\mapsto3$.
A careful calculation then reveals that
\begin{equation}
\label{eq:careful qubits}
  \Braket{\tr (\Psi_{AB}^{T_B})^3}
\leq 2^{-3N_b} \sum_{\{\pi_x\}} 2^{-N \sum_{\braket{xy}} d(\pi_x, \pi_y)}
\end{equation}
where the sum is over all choices of permutations $\pi_x \in S_3$, subject to the boundary conditions $\pi_x=\zeta$ for $x\in A$, $\pi_x=\zeta^{-1}$ in $B$, and $\pi_x=1$ in $C$; the sum in the exponent is over all edges, and we define $d(\pi_x,\pi_y)$ as the minimal number of transpositions required to go from one permutation to the other.
We can interpret the right-hand side of~\eqref{eq:careful qubits} as the partition sum of a ferromagnetic spin model with permutation degrees of freedom at each vertex at inverse temperature $\log N$ (\cref{fig:tripartite}~(b)). 

For large $N$, we are in the low-temperature (ordered) phase and the partition function is dominated by the minimal energy configuration:
\[ \sum_{\{\pi_x\}} 2^{-N \sum_{\braket{xy}} d(\pi_x, \pi_y)} \leq 2^{-N E_0} \bigl( \# + \delta \bigr), \]
where $E_0$ denotes the minimal energy, $\#$ the number of minimal energy configurations and $\delta=6^{V_b}/2^N$.
Now consider an arbitrary configuration $\{s_x\}$, minimal or not.
If we denote by $V_A$ the $\zeta$-domain then the boundary conditions ensure that 
$V_A$ is a cut separating $A$ from $BC$.
While this cut is not necessarily minimal, we always have that $N \lvert\partial V_A\rvert \geq S_{RT}(A)$, where $\lvert\partial V_A\rvert$ denotes the number of edges that leaves $V_A$.
Likewise, the $\zeta^{-1}$-domain $V_B$ is a cut for $B$ and the identity domain $V_C$ is a cut for $C$, so that $N \lvert\partial V_B\rvert\geq S_{RT}(B)$ and $N \lvert\partial V_C\rvert\geq S_{RT}(C)$.
For each edge leaving $V_A$, the energy cost is at least $1$, and it is $2$ if the edge enters one of the domains $V_B$ or $V_C$ (since $1,\zeta,\zeta^{-1}$ are even permutations). 
Thus the energy cost of an arbitrary configuration $\{s_x\}$ can be lower bounded by
$N E[\{s_x\}] \geq S_{RT}(A) + S_{RT}(B) + S_{RT}(C)$,
with equality if and only if all three domains $V_A$, $V_B$, and $V_C$ are disjoint minimal cuts and if each connected component of the remaining bulk vertices
is assigned a transposition.
This can always be achieved, so 
\[ E_0 = \bigl( S_{RT}(A) + S_{RT}(B) + S_{RT}(C) \bigr)/N, \]
with degeneracy $\#\leq 3^{\#_b} \#_A\#_B\#_C$, since there are three possible transpositions to choose from for each component (\cref{fig:tripartite}~(b)).
Combining these estimates with~\eqref{eq:ghz no} and using basic properties of the trace, we obtain \cref{thm:main} for qubits.

\smallskip

For $p\neq2$, the stabilizer states no longer form a projective 3-design.
To generalize our preceding argument, we derive a new formula for the third moment of a random stabilizer state $\ket V$ in $(\CC^p)^{\ot n}$, where $p\equiv2\pmod3$ and $n\geq3$~\cite{sm}:
\begin{equation}
\label{eq:third}
  \bigl\langle \ket V\!\!\bra V^{\ot 3} \bigr\rangle
  = \frac 1 {p^n(p^n + 1)(p^n+p)} \sum_T R(T)
\end{equation}
The sum is over the group $G_3(p)$ of orthogonal and doubly stochastic $3\times3$-matrices with entries in $\FF_p$; $R(T)$ is the corresponding operator defined on $(\CC^p)^{\ot 3n}$ by $R(T)=r(T)^{\ot n}$, $r(T) \ket{\vec q} = \ket{T\vec q}$ for $\vec q \in \FF_p^3$.
For qubits, $G_3(p)$ is equal to the permutation group; in general, it contains the latter as a proper subgroup.
In contrast to previous results, which compute the frame potential of stabilizer states~\cite{kueng2015qubit,zhu2015multiqubit,webb2015clifford}, our formula can be used to evaluate arbitrary third moments; we expect that~\eqref{eq:third} will be of independent interest in quantum information theory.

Just like in the case of qubits, the operators $R(T)$ act as a tensor product with respect to the $n$ copies of the single-particle replica Hilbert space $(\CC^p)^{\ot 3}$.
This is the central property that allows us to adapt the argument given above for qubits to obtain a classical ferromagnetic spin model with $G_3(p)$ degrees of freedom.
\Cref{thm:main} follows as above by an analysis of the low-temperature behavior of this model.
See~\cite{sm} for the technical details.

\ssection{Discussion and outlook}%
We have initiated a comprehensive study of multipartite entanglement in tensor network models of holography.
Our results suggest several avenues for further investigation:
First, it would be of mathematical interest to extend our analysis and establish sharp deviation bounds as in~\cite{smith2006typical}.
Second, tensor networks can also be used to define bulk-boundary mappings, or `holographic codes'~\cite{pastawski2015holographic,yang2015bidirectional,bao2015holographic}.
In this case, the entanglement entropies of code states obtain a bulk correction, in agreement with the expectations of AdS/CFT~\cite{faulkner2013quantum}, and it is natural to ask in which way the multipartite entanglement of typical code states is determined by the bulk~\footnote{In light of our approach, it is natural to conjecture that $S(\Phi_{V_A}) + S(\Phi_{V_B}) + S(\Phi_{V_C}) + \log_p \tr [\Phi^{\ot 3} R_{V_A}(\zeta) R_{V_B}(\zeta^{-1}) R_{V'}(\tau)]$, where $\Phi$ is the inserted bulk state, will play a significant role.}.
Third, diagnostics such as moments of the partial transpose considered in this paper may provide a path towards generalizing our results to non-stabilizer states and lead to a more refined understanding of multipartite entanglement in the AdS/CFT correspondence.

Random tensor networks have been a crucial source of inspiration for recent developments in the information theory of quantum gravity, in part due to complete analytical control over their bipartite entanglement structure.
Some important examples include entanglement wedge reconstruction~\cite{dong2016reconstruction} and the recent progress on understanding the black hole information paradox~\cite{penington2019entanglement}.
In some cases, the connections go beyond mere inspiration, for instance, fixed-area states in quantum gravity mimic the entanglement properties of random tensor networks~\cite{dong2019flat}.
The stabilizer random tensor network model presented in this paper shares these very same features, while in addition allowing precise analytical access to the multipartite entanglement structure.

\ssection{Acknowledgements}%
It is a pleasure to thank David Gross, Patrick Hayden, Debbie Leung, Xiao-Liang Qi, Lenny Suss\-kind, Zhao Yang, Huangjun Zhu for inspiring discussions.
SN acknowledges support of Stanford Graduate Fellowship.
MW gratefully acknowledges support from FQXI, the Simons Foundation, the DoD Multidisciplinary University Research Initiative (MURI), and an NWO Veni grant (no.~680-47-459).
\bibliography{randomstabs}
\clearpage

\appendix
\begin{widetext}
\begin{center}
    {\Large Supplemental Material}
\end{center}
\section{Quantization of the trace}\label{app:trace}
In~\cite[App.~F]{hayden2016holographic}, it was shown that if $\ket{\phi_A}\in(\CC^p)^{\ot a}$ and $\ket{\psi_{AB}}\in(\CC^p)^{\ot(a+b)}$ are stabilizer states, with corresponding stabilizer groups $G$ and $H$, then the projection $\ket\Psi_B = \braket{\phi_A | \psi_{AB}}$, if nonzero, is given by
\[ \Psi_B = \frac {\lvert K\rvert} {\lvert H \rvert} \frac 1 {\lvert L \rvert} \sum_{g_B\in L} g_B, \]
where $K$ some subgroup of $G\times H$ and $L$ a commutative subgroup of the corresponding Weyl-Heisenberg group, implying that $\Psi_B$ is again a stabilizer state.
The order of both $K$ and $H$ is a power of $p$, so that $\tr\Psi_B = \lvert K\rvert/\lvert H \rvert$ is necessarily quantized in powers of $p$.
Moreover, $L$ was defined in \cite{hayden2016holographic} as the homomorphic image of $K$, so that $\lvert K\rvert\geq\lvert L\rvert$, and hence $\tr\Psi_B\geq\lvert L\rvert/\lvert H\rvert=p^b/p^{a+b}$, since $\lvert L\rvert=p^b$ and $\lvert H\rvert=p^{a+b}$.
Thus we find that $\tr\Psi_B=p^k/p^a$, where $k=0,\dots,a$.

Applied to the tensor network state $\ket\Psi$ defined in~\eqref{eq:tns}, where the vertex tensors $\ket{V_x}$ are stabilizer states, we note that $\ket\Psi$ is obtained by projecting the collection of Bell pairs onto the tensor product $\bigotimes_x \ket{V_x}$, which is a stabilizer state in $(\CC^p)^{\ot N_b}$.
Thus we obtain that $\ket\Psi$ is either zero or again a stabilizer state, with trace $\tr\Psi=p^k/p^{N_b}$, where $k=0,\dots,N_b$.

\section{Proof of the Ryu-Takayanagi formula}\label{app:rt}
We give a succinct derivation of the lower bound on the average entanglement entropy.
The central fact that we will use is that random stabilizer states form a projective 2-design~\cite{klappenecker2005mutually,gross2007evenly}.
Thus their first and second moments agree with the Haar measure; we have that $\braket{\ket V\!\!\bra V_x} = I / D_x$ and $\braket{\ket V\!\!\bra V_x^{\ot 2}} = (I + F_x)/D_x(D_x+1)$, where $I$ denotes identity operators and $F_x$ the swap operator on two copies of the Hilbert space of vertex $x$.
The former readily gives
\[ \Braket{\tr\Psi} 
= p^{-N_b}, \]
and using the latter it can be quickly calculated that
\begin{align*}
  \Braket{\tr\Psi_A^2}
= \tr\Braket{\Psi^{\ot2}} F_A
= \frac 1 {\prod_{x\in V_b} D_x(D_x+1)} \tr \left[ \left(\prod_e \ket e\!\!\bra e^{\ot 2}\right) \left( \prod_{x\in V_b} (I + F_x) \right) F_A\right]
\leq p^{-2 N_b} \sum_{V_A\cap V_\partial=A} p^{-N \lvert\partial V_A\rvert)},
\end{align*}
where we have used that each $\ket e$ is a maximally entangled state of rank $D=p^N$; we recall that $\lvert\partial V_A\rvert$ denotes the number of edges that leaves $V_A$.
The right-hand side sum is over all cuts $V_A$ between $A$ and $\bar A$, as explained in the main text.
It is plain that the sum will be dominated by the minimal cuts, 
as all other cuts are suppressed by a factor $1/p^N$ or more.
Thus,
\begin{equation}
\label{eq:second moment upper bound}
  \Braket{\tr\Psi_A^2} \leq p^{-2 N_b} p^{-S_{RT}(A)} \Bigl( \#_A + \varepsilon \Bigr),
\end{equation}
where $\#_A$ is the number of minimal cuts and $\varepsilon := 2^V / p^N$.
This calculation has two important consequences:

First, for $A=\emptyset$ we have that $\Psi_A=\tr\Psi$, so the above can be used to bound the fluctuations of the trace of the unnormalized tensor network state~\eqref{eq:tns}.
Here, $\#_A=1$ as long as each connected component of the graph contains at least one boundary vertex (so in particular if the graph is connected), so that $\braket{(\tr\Psi)^2} \leq p^{-2 N_b} (1 + \varepsilon)$.
From \cref{app:trace} we know that if $\Psi\neq0$ then $\tr\Psi=p^k/p^{N_b}$ for some integer $k=0,1,\dots,N_b$.
Let us write $q_k$ for the probability that $\tr\Psi=p^k/p^{N_b}$; we are interested in bounding $q_0$.
Then we obtain the following two relations from the first and second moment of $\tr\Psi$ computed above:
\[ \sum_{k=0}^{n_V} q_k p^k = 1, \quad \sum_{k=0}^{n_V} q_k p^{2k}\leq1+\varepsilon. \]
It follows that $1+\varepsilon\geq q_0 + p\sum_{k=1}^{n_V} q_k p^{2k-1}\geq q_0 + p\sum_{k=1}^{n_V} q_k p^k=q_0 + p(1-q_0)=(1-p)q_0+p$ and hence that $q_0\geq1-\frac\varepsilon{p-1}\geq1-\varepsilon$.
In other words,
\begin{equation}
\label{eq:prob trace nz}
  \Pr(\Psi\neq0) \geq \Pr(\tr\Psi=p^{-N_b}) = q_0 \geq 1-\varepsilon.
\end{equation}
Thus we do not only find that $\Psi\neq0$, but in fact that the trace is equal to its expected and minimal value with high probability as $N$ or $p$ becomes large.

Second, recall that the entanglement entropy can always be lower-bounded by the R\'enyi-2 entropy $S_2(A) = -\log_p\tr\rho_A^2$.
For stabilizer states we in fact have equality, as their entanglement spectra are flat, and thus
\[ \braket{S(A)}_{\neq0}
= 2\braket{\log_p\tr\Psi}_{\neq0} - \braket{\log_p\tr\Psi_A^2}_{\neq0}, \]
where we write $\braket{f}_{\neq0}$ for the expectation value of an observable $f$ conditioned tensor network state being nonzero ($\Psi\neq0$).
Using the fact that $\tr\Psi\geq p^{-N_b}$ if $\Psi\neq0$, Jensen's inequality for the (concave) logarithm, and $\braket{\tr\Psi_A^2} = \braket{\tr\Psi_A^2}_{\neq0} \Pr(\Psi\neq0)$, we can bound this as
\begin{align*}
  \braket{S(A)}_{\neq 0}
\geq -2 N_b - \log_p \braket{\tr\Psi_A^2} + \log_p (1-\varepsilon)
\geq S_{RT}(A) - \log_p ( \#_A + \varepsilon ) + \log_p (1-\varepsilon).
\end{align*}
where we have plugged in the upper bound~\eqref{eq:second moment upper bound} to obtain the second inequality.
Since $\varepsilon$ is arbitrarily small for large enough $N$ or $p$, we obtain that
\[ \braket{S(A)}_{\neq0} \geq S_{RT}(A) - \log_p \#_A - 4\varepsilon, \]
where $\#_A$ is the number of minimal cuts.
Thus the expected entanglement entropy of a boundary subsystem in a random stabilizer network is indeed close to saturating the Ryu-Takayanagi formula.

\section{Third moment of stabilizer states}\label{app:third}
In this section we give a detail proof of our formula~\eqref{eq:third} for the third moment of a random pure stabilizer state in $(\CC^p)^{\ot n}$ with local dimension $p\equiv2\pmod3$ and $n\geq3$.

Let $T$ be an invertible $3\times3$-matrix with entries in $\FF_p$.
The set of all such matrices is the general linear group $\GL_3(p)$.
We consider the representation~$r(T)$ of $\GL_3(p)$ on $(\CC^p)^{\ot 3}$, given by $r(T)\ket{\vec q}=\ket{T\vec q}$, and its $n$-fold tensor power $R(T) := r(T)^{\ot n}$ on $((\CC^p)^{\ot 3})^{\ot n} \cong (\CC^p)^{3n}$.
We note that $R(T)$ is represented by real orthogonal matrices in the computational basis (in fact, by a permutation matrix).

We say that $T$ is \emph{orthogonal} if $T T^t = T^t T = I$, and we call $T$ \emph{doubly stochastic} if its row sums and column sums are equal to $1\pmod p$.
Let $G_3(p)$ denote the group of orthogonal and doubly stochastic $3\times 3$-matrices with entries in $\FF_p$.
We note that a row-stochastic (or column-stochastic) orthogonal matrix is automatically doubly stochastic.
It is plain that $G_3(p)$ contains the group of permutation matrices as a subgroup, which we will identify with the permutation group $S_3$.
Moreover, for any permutation matrix $\pi$, $R(\pi)$ agrees with the usual permutation action of $S_3$ on $((\CC^p)^{\ot n})^{\ot3}$.
We will give an explicit description of $G_3(p)$ in \cref{eq:concrete elements,eq:explicit} below.

For qubits, $p=2$, it is easy to see in fact any orthogonal and doubly stochastic matrix is a permutation matrix and hence $G_3(p) = S_3$.
Thus~\eqref{eq:third} is reduces to $\braket{\ket V\!\!\bra V^{\ot 3}} = \sum_{\pi\in S_3} R(\pi) / 2^n(2^n + 1)(2^n+2)$, which follows directly from the recent result that multiqubit stabilizer states form a projective 3-design~\cite{kueng2015qubit,zhu2015multiqubit,webb2015clifford}.
For odd primes $p$, however, this is no longer the case and we have to develop new methods.

The set of stabilizer states $\Stab(n,p)$ on $(\CC^p)^{\ot n}$ is a single orbit under the Clifford group $\Cliff(n,p)$.
In particular, the third moment $\braket{\ket V\!\!\bra V^{\ot 3}}$ is an operator that commutes with $U^{\ot 3}$ for any Clifford unitary $U\in\Cliff(n,p)$, i.e., an element of the \emph{commutant} of $\Cliff(n,p)^{\ot 3}$.
For qubits, this commutant is generated by the permutation action $R(\pi)$ for $\pi\in S_3$ (indeed, this implies that multiqubit stabilizer states form a 3-design).
We will show that the analogous statement holds true for general $p\equiv2\pmod 3$ if we consider the larger group of orthogonal and doubly stochastic matrices in $G_3(p)$ (\cref{thm:commutant}); this result will in turn imply~\eqref{eq:third} at once:

\begin{thm}
\label{thm:commutant}
  Let $p\equiv2\pmod3$ be a prime and $n\geq3$.
  Then the operators $R(T)$ for $T \in G_3(p)$ are $2p+2$ linearly independent operators that span the commutant of $\Cliff(n,p)^{\ot 3}$.
\end{thm}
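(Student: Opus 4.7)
The proof naturally splits into four parts: (a) show that $R(T)$ lies in the commutant of $\Cliff(n,p)^{\otimes 3}$ for each $T\in\Sigma_3(p)$; (b) prove linear independence of the family $\{R(T)\}_{T\in\Sigma_3(p)}$; (c) verify the count $|\Sigma_3(p)|=2p+2$; and (d) show that the commutant has dimension $2p+2$, so that the family also spans. Parts (a)--(c) are geometric and structural; (d) is the technical heart.

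For (a), I would exploit that $\Cliff(n,p)$ is generated by Weyl--Heisenberg operators together with the metaplectic lift of $\Sp(2n,\FF_p)$. Commutation of $R(T)=r(T)^{\otimes n}$ with $(X_k^a)^{\otimes 3}$ reduces to invariance of $r(T)$ under the joint shift $\ket{\vec y}\mapsto\ket{\vec y+a\mathbf 1_3}$, which is equivalent to $\mathbf 1_6\in T$ (stochasticity). Commutation with $(Z_k^a)^{\otimes 3}$ requires the functional $\phi(\vec x,\vec y):=\mathbf 1_3\cdot\vec x-\mathbf 1_3\cdot\vec y$ to vanish on $T$; but $\phi(\cdot)=B(\mathbf 1_6,\cdot)$ for the bilinear form $B((\vec x,\vec y),(\vec x',\vec y'))=\vec x\cdot\vec x'-\vec y\cdot\vec y'$ that defines the Lagrangian condition, so $\phi|_T=0$ is automatic from $\mathbf 1_6\in T$ together with $B|_T=0$. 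Invariance under the metaplectic part requires more care but follows from the fact that symplectic transformations preserve both the Lagrangian property and stochasticity of $T$ under the natural identification of the three-fold single-particle rep space. For (b), using $n\geq 2$, fix $T_0\in\Sigma_3(p)$ and pick two pairs $(\vec a_1,\vec b_1),(\vec a_2,\vec b_2)\in T_0$ such that together with $\mathbf 1_6$ they span $T_0$ (possible since $\dim T_0=3$). The matrix entry of $R(T)$ at indices $((\vec a_1,\vec a_2);(\vec b_1,\vec b_2))$ equals $1$ iff $T$ contains both pairs; combined with $\mathbf 1_6\in T$ and $\dim T=3$ this forces $T=T_0$, isolating the coefficient $c_{T_0}$ in any relation $\sum_T c_T R(T)=0$. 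For (c), I would parametrize elements of $\Sigma_3(p)$ by bases completing $\mathbf 1_6$, classify the induced isotropic planes in $\FF_p^6/\langle\mathbf 1_6\rangle$, and separate the $6$ permutation graphs $T_\pi$ from the remaining $2p-4$ Lagrangians to obtain $|\Sigma_3(p)|=2p+2$.

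The crux is (d). By character theory, $\dim(\text{commutant})=\EE_{U\in\Cliff(n,p)}|\tr U|^6$, the third frame potential of the Clifford group. I would factor the Clifford average through the semidirect structure: a Pauli average first projects onto operators invariant under the Weyl--Heisenberg subgroup, whose support in the $6n$-dimensional phase space is forced to be a Lagrangian; the subsequent $\Sp(2n,\FF_p)$-average then counts orbits that, for $n\geq 2$, correspond bijectively to elements of $\Sigma_3(p)$. This yields $2p+2$, which combined with (b) and (c) forces the $R(T)$ to span. The main obstacle is the symplectic orbit count: one must match the explicit enumeration from (c) against the fixed-point structure of $\Sp(2n,\FF_p)$ on isotropic triples, and it is precisely here that $n\geq 2$ is essential --- for $n=1$ the metaplectic group is too small to distinguish all Lagrangians, and the commutant is strictly larger than $\text{span}\{R(T):T\in\Sigma_3(p)\}$.
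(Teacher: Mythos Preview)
Your four-part structure mirrors the paper's, and parts (b)--(d) align closely: your linear-independence argument is exactly \cref{lem:indep}; the paper exhibits the $2p+2$ subspaces explicitly in~\eqref{eq:concrete elements} rather than enumerating $\Sigma_3(p)$ abstractly; and for (d) the paper simply cites the known value $\Phi_3=2p+2$ of the third frame potential for $n\geq2$ from~\cite{zhu2015multiqubit} rather than re-deriving it via a symplectic orbit count.

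Part (a) is where your proposal departs from the paper and where it has a genuine gap. The paper does \emph{not} check Clifford generators. Instead it introduces the orthogonal stochastic group $\mathcal O_{3,3}(p)$ acting transitively on $\Sigma_3(p)$, lifts each $O$ to a superoperator $\mathcal R(O)$, and proves via discrete phase space and state--channel duality (\cref{lem:main}) that $\mathcal R(O)$ commutes with conjugation by $U^{\otimes 3}$ for every Clifford $U$; since $\mathcal R(O)[I]=R(O\Delta)$, invariance of the identity transports to invariance of every $R(T)$. This is more conceptual and generalizes immediately to higher moments. Your generator check is more elementary and can be made to work, but the justification you give for the metaplectic part is wrong: $\Sp(2n,\FF_p)$ acts on the phase space $\FF_p^{2n}$, not on $\FF_p^3\oplus\FF_p^3$, so ``symplectic transformations preserve the Lagrangian property and stochasticity of $T$'' is meaningless as written. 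The correct mechanism is the Lagrangian condition with respect to the form $B$ itself: for the single-qudit Fourier gate one computes $F^{\otimes 3} r(T) (F^\dagger)^{\otimes 3}=r(T^{\perp_B})=r(T)$ because a three-dimensional $B$-isotropic subspace is its own $B$-orthogonal complement; for the phase gate, invariance is precisely $B((\vec x,\vec y),(\vec x,\vec y))=0$ on $T$; and for CNOT between two qudits, invariance reduces to closure of $T$ under addition. No symplectic action on $T$ enters anywhere.
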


To prove \cref{thm:commutant} we need some intermediate results which are of independent interest.
We start by analyzing the phase space picture for odd $p$.
Any Clifford unitary $U \in \Cliff(n,p)$ can be parametrized by a symplectic matrix $S \in \Sp(2n,p)$ and a vector $b \in \FF_p^{2n}$~\cite{gross2006hudson}, such that $U A(x) U^\dagger = A(Sx+b)$ for any phase space point operator $A(x)$, $x \in \FF_p^{2n}$.
That is, the conjugation action of $U$ corresponds to the affine action $x\mapsto Ax+b$ on phase space.
Now consider the three-fold replica Hilbert space $((\CC^p)^{\ot n})^{\ot3}$.
The corresponding classical phase space $\FF_p^{2(3n)}$ can be identified with $\FF_p^{2n} \ot \FF_p^3$, where the second factor corresponds to the three-fold replica (the tensor product is over the finite field $\FF_p$).
From this perspective, $U^{\ot 3}$ is again a Clifford unitary, corresponding to the affine transformation
\begin{equation}
\label{eq:three-fold classical}
  x\mapsto \bigl(S \ot I_3\bigr) x + b \ot \left(\begin{smallmatrix}1\\1\\1\end{smallmatrix}\right)
\end{equation}
On the other hand, a direct calculation shows that $R(T) A(q,p) R(T)^\dagger = A((I_n \ot T)q, (I_n\ot T^{-t})q)$ for any $T \in \GL_3(p)$ and $(q,p) \in \FF_p^{2(3n)}$.
Thus $R(T)$ is a Clifford unitary in $\Cliff(3n,p)$.
If we restrict to orthogonal matrices, then $T^{-t} = T$, and hence $R(T)$ corresponds to the affine transformation
\begin{equation}
\label{eq:T classical}
  x \mapsto \bigl(I_{2n} \ot T\bigr)(x)
\end{equation}
If $T$ is row stochastic then $T \left(\begin{smallmatrix}1\\1\\1\end{smallmatrix}\right) = \left(\begin{smallmatrix}1\\1\\1\end{smallmatrix}\right)$ and hence the phase space transformations~\eqref{eq:three-fold classical} and~\eqref{eq:T classical} commute with each other.
(Conversely, it is easy to see that these conditions are also necessary for the two transformations to commute.)
We can now establish the following lemma:

\begin{lem}
\label{lem:commuting}
  For any odd prime $p$ and $n$, the operators $R(T)$ for $T \in G_3(p)$ commute with any $U^{\ot 3}$ for $U \in \Cliff(n,p)$.
\end{lem}
\begin{proof}
  We have just seen that the phase space transformations corresponding to $U^{\ot 3}$ and $R(T)$ commute with each other.
  That is, $U^{\ot3} R(T) A(x) R(T)^\dagger (U^\dagger)^{\ot3} = R(T) U^{\ot3} A(x) (U^\dagger)^{\ot3} R(T)^\dagger$ for any phase space point operator $A(x)$, which implies that $U^{\ot 3} R(T) = \gamma R(T) U^{\ot 3}$ for some global phase $\gamma\in U(1)$.

  To fix the phase, note that $R(T) \ket{x}^{\ot 3} = \ket{x}^3$ by row stochasticity.
  Now consider some nonzero matrix element $\braket{x|U|y}\neq0$, where $x\in\FF_p^n$ and $y\in\FF_p^n$.
  Then, $\bra x^{\ot 3} U^{\ot 3} R(T) \ket y^{\ot 3} = \braket{x|U|y}^3 = \bra x^{\ot 3} R(T) U^{\ot 3} \ket y^{\ot 3}$, which shows that $\gamma=1$.
\end{proof}

The dimension of the commutant of $\Cliff(n,p)^{\ot 3}$ is known as the \emph{third frame potential} of the Clifford group, denoted $\Phi_3$ in~\cite{zhu2015multiqubit}.
It can be evaluated by counting the orbits of the diagonal action of the symplectic group on \emph{two} copies of the phase space.
The result is that $\Phi_3 = 2p + 2$ for $n\geq2$~\cite[eq.~(9)]{zhu2015multiqubit}.
Thus in order to establish \cref{thm:commutant} it suffices to exhibit $2p+2$ linearly independent operators in the commutant of $\Cliff(n,p)^{\ot 3}$.

\begin{lem}
\label{lem:indep}
  If $n\geq3$ then operators $R(T)$ are linearly independent.
\end{lem}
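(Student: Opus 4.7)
The plan is to reduce the claim to positive definiteness of the Gram matrix $G_{T,T'} := \operatorname{tr}(R(T)^\dagger R(T'))$ indexed by $T,T'\in\Sigma_3(p)$, since this is equivalent to linear independence of the $R(T)$ in the Hilbert-Schmidt inner product.

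First I would compute $G$ explicitly. Using that $r(T)^\dagger = \sum_{(\vec x,\vec y)\in T} \ket{\vec y}\bra{\vec x}$, a direct expansion in the computational basis gives $\operatorname{tr}(r(T)^\dagger r(T')) = |T \cap T'|$, and taking the $n$-th tensor power yields $G_{T,T'} = |T \cap T'|^n = p^{n \dim(T \cap T')}$. Since $T$ and $T'$ are both $3$-dimensional subspaces of $\FF_p^6$, we have $\dim(T \cap T')=3$ iff $T=T'$ and $\dim(T\cap T')\leq 2$ otherwise, so the diagonal entries of $G$ equal $p^{3n}$ while all off-diagonal entries are at most $p^{2n}$. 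The Gershgorin circle theorem then reduces invertibility of $G$ to strict diagonal dominance, since the entries are real and non-negative and the diagonal is positive.

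The main step is verifying the dominance. For odd $p$, $|\Sigma_3(p)|=2p+2$, so the row-sum of off-diagonal entries is at most $(2p+1)p^{2n}$, which is smaller than $p^{3n}$ whenever $2p+1<p^n$; this holds for all $p\geq 3$ and $n\geq 2$, since $p^2\geq 2p+1$ for $p\geq 3$. The main obstacle is the qubit case $p=2$: here $\Sigma_3(2)=S_3$ and the naive bound $5\cdot 2^{2n}<2^{3n}$ fails at $n=2$. For $p=2,\ n=2$ I would instead use a refined count: fixing $\pi\in S_3$, exactly three permutations $\sigma\neq\pi$ give $\dim(T_\pi\cap T_\sigma)=2$ (those with $\pi^{-1}\sigma$ a transposition, whose fixed set in $\FF_2^3$ is a $2$-dimensional subspace) and exactly two give $\dim=1$ (those with $\pi^{-1}\sigma$ a $3$-cycle, fixing only a line). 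This yields the sharper row-sum bound $3\cdot 2^{2n}+2\cdot 2^n<2^{3n}$, valid for all $n\geq 2$. Equivalently, one may simply invoke the standard fact that the $S_3$-permutation operators on $((\CC^2)^{\ot n})^{\ot 3}$ are linearly independent once $2^n\geq 3$, which covers the residual $p=2$ case.
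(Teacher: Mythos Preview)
Your argument is correct, but it follows a genuinely different route from the paper's. The paper vectorizes each subspace as $\ket{T}=\sum_{v\in T}\ket v\in(\CC^p)^{\ot 6}$ and, for fixed $T$, picks $v_1,v_2$ completing $\mathbf 1_6$ to a basis of $T$; then the linear functional $\bra{v_1}\ot\bra{v_2}\ot\bra{0}^{\ot(n-2)}$ evaluates to $\delta_{T,T'}$ on $\ket{T'}^{\ot n}$, because any $T'\in\Sigma_3(p)$ containing $v_1$ and $v_2$ already contains $\mathrm{span}(\mathbf 1_6,v_1,v_2)=T$. This is a cardinality-free separation argument that works uniformly in $p$ and needs no case analysis. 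Your Gram--Gershgorin approach is more quantitative (it implicitly bounds the conditioning of the Gram matrix), but it relies on the count $|\Sigma_3(p)|=2p+2$ for odd $p$. In the paper's logical order that count is established \emph{after} this lemma, by combining linear independence with the known value $\Phi_3=2p+2$ of the third frame potential; so as written your proof would be circular inside the paper, though the count can of course be obtained by a direct enumeration of stochastic Lagrangians in $\FF_p^6$. Your refined $p=2$ analysis (three transpositions contributing $p^{2n}$ and two $3$-cycles contributing $p^n$) is correct and neatly recovers the classical fact that the $S_3$ permutation operators on $(\CC^d)^{\ot 3}$ are independent once $d\geq 3$.
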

\begin{proof}
  Let $e_1$, $e_2$, $e_3$ denote the first three standard basis vectors of $\FF_p^n$.
  Then
  $R(T) \ket{e_1,e_2,e_3} = \ket{t_1,t_2,t_3} =: \ket{T},$
  where $t_i = T_{i,1} e_1 + T_{i,2} e_2 + T_{i,3} e_3 \in \FF_p^n$ is equal to the $i$-th row of $T$, extended suitably by zeros.
  Clearly, $\braket{T|T'}=0$ for $T\neq T'$, and hence the operators $R(T)$ are indeed linearly independent.
\end{proof}

In view of the preceding, the $R(T)$ for $T \in G_3(p)$ are linearly independent operators in the commutant of $\Cliff(n,p)^{\ot3}$, which is of dimension $2p+2$.
We now explicitly construct $2p+2$ distinct -- and therefore \emph{all} -- matrices $T \in G_3(p)$.
For this, consider the following numbers in $\FF_p$
\[ a_m = \frac {1+m} {1+m+m^2}, \quad b_m = \frac {-m} {1+m+m^2}, \quad c_m = \frac {m+m^2} {1+m+m^2} \qquad (m=0,\dots,p-1), \]
where all arithmetic is modulo $p$.
The third cyclotomic polynomial $1+X+X^2$ is irreducible if (and only if) $p\equiv2\pmod3$, hence the division by $1+m+m^2$ is a valid operation.
Consider, furthermore,
\[ a_p = b_p = 0, \quad c_p = 1. \]
It is easily verified that the $p+1$ triples $(a_m,b_m,c_m)$ for $m=0,\dots,p$ are all distinct.
Now recall that the permutation group can be decompose into the even and odd permutations, $S_3 = \{1, \zeta, \zeta^{-1}\} \cup \{\tau_{12}, \tau_{13}, \tau_{23}\}$, where $\zeta$ is the cyclic permutation that sends $1\mapsto2\mapsto3$ and $\tau_{ij}$ the transposition that interchanges $i\leftrightarrow j$.
We correspondingly define an \emph{even} and an \emph{odd} matrix for each $m=0,\dots,p$:
\begin{equation}
  \label{eq:concrete elements}
  T_{m,\text{even}} = a_m + b_m \zeta + c_m \zeta^{-1} = \begin{pmatrix} a_m & c_m & b_m \\ b_m & a_m & c_m \\ c_m & b_m & a_m\end{pmatrix},
\qquad T_{m,\text{odd}} = a_m \tau_{12} + b_m \tau_{13} + c_m \tau_{23} = \begin{pmatrix} c_m & a_m & b_m \\ a_m & b_m & c_m \\ b_m & c_m & a_m\end{pmatrix}
\end{equation}
It can be seen by direct inspection that the matrices $T_{m,\text{even}}$ and $T_{m,\text{odd}}$ thus defined are orthogonal and doubly stochastic 
(indeed, we have that $a_m + b_m + c_m = a_m^2 + b_m^2 + c_m^2 = 1$, while $a_m b_m + a_m c_m + b_m c_m = 0$).
This concludes the proof of \cref{thm:commutant}.

\smallskip

The preceding discussion shows that, for $p\equiv2\pmod3$ and $n\geq3$, we can write
\begin{equation}
\label{eq:explicit}
  G_3(p) = \{ T_{m,\text{even}} : m=0,\dots,p \} \cup \{ T_{m,\text{odd}} : m=0,\dots,p \}.
\end{equation}
It is plain that our notion of even and odd elements in $G_3(p)$ specializes to the definition for the subgroup of permutation matrices $S_3 \subseteq G_3(p)$.
Moreover, just as for the permutation group, the product $T T'$ of any two elements in $G_3(p)$ is even if and only if $T$ and $T'$ are both even or both odd; in particular, the even elements $\{T_{m,\text{even}}\}$ form a subgroup of $G_3(p)$.
We now compute the trace of each representation matrix $R(T)$. Since $R(T)$ acts by permuting the computational basis vectors of $(\CC^p)^{\ot3n}$, its trace is equal to the number of fixed points, hence
$\tr R(T) = p^{\dim \ker (I_n \ot T - I_{3n})} = p^{n \ker (T - I)}$.
A direct calculation shows that
\begin{align*}
  &\tr R(T_{0,\text{even}}) = \tr I = p^{3n}, & \\
  &\tr R(T_{m,\text{even}}) = p^{n} & (m=1,\dots,p), \\
  &\tr R(T_{m,\text{odd}}) = p^{2n} & (m=0,\dots,p).
\end{align*}
In particular,
\begin{equation}
\label{eq:sum of traces}
  \sum_{T \in G_3(p)} \tr R(T) = p^{3n} + p \, p^n + (p+1) p^{2n} = p^n (p^n + 1) (p^n + p),
\end{equation}
and we also obtain the following formula, which we record for future reference:
\begin{align}
\label{eq:inner product}
  \frac 1 {p^{3N}} \tr R(T_x) R(T_y)^\dagger &= p^{-d(T_x, T_y)}, \\
\intertext{where}
  d(T_x, T_y) &= \begin{cases}
    0 &\quad \text{if } T_x = T_2 \\
    1 &\quad \text{if } T_x T_y^{-1} \text{ is odd} \\
    2 &\quad \text{if } T_x T_y^{-1} \text{ is even and } T_x \neq T_y
  \end{cases} \nonumber
\end{align}
We note that $d(T_x, T_y)$ defines a metric on $G_3(p)$.
At last we compute the third moment of a random stabilizer state:

\begin{proof}[Proof of formula~\eqref{eq:third} for the third moment]
As explained at the beginning of this section, we can evaluate the third moment of a random stabilizer state by averaging over the Clifford group:
\[ M_3 := \bigl\langle\ket V\!\!\bra V^{\ot 3}\bigr\rangle = \bigl\langle U^{\ot 3} \ket0\!\!\bra0^{\ot 3n} (U^\dagger)^{\ot 3n} \bigr\rangle \]
Here, $\ket V\!\!\bra V$ denotes a stabilizer state and $U$ a Clifford unitary, each chosen uniformly at random.
It is apparent from the right-hand side that $M_3$ commutes $\Cliff(n,p)^{\ot 3}$.
By \cref{thm:commutant}, we can therefore write $M_3 = \sum_{T \in G_3(p)} \gamma_T R(T)$ for some coefficients $\gamma_T \in \CC$.
Now observe that
\[ R(T) M_3 = \bigl\langle U^{\ot 3} R(T) \ket0\!\!\bra0^{\ot 3n} (U^\dagger)^{\ot 3} \bigr\rangle = M_3 \]
for all $T \in G_3(p)$, where the first identify holds since $R(T)$ commutes with $U^{\ot3}$, and the second because $R(T)\ket 0^{\ot3n} = \ket 0^{\ot3n}$.
It follows that all $\gamma_T$ are equal, and hence that $M_3 \propto \sum_{T \in G_3(p)} R(T)$.
We obtain the desired normalization constant in~\eqref{eq:third} by comparing $\tr M_3 = 1$ with~\eqref{eq:sum of traces}.
\end{proof}

We conclude this section with some remarks on higher moments.
For this, denote by $G_k(p)$ the group of orthogonal and doubly stochastic $k\times k$-matrices $T$ and define $R(T)$ accordingly.
Then \cref{lem:commuting,lem:indep} generalizes readily; we have that $[R(T),U^{\ot k}]=0$ for all Clifford unitaries $U\in\Cliff(n,p)$, and the operators $R(T)$ are linearly independent if $n\geq k$.
For example, $G_4(2)$ contains two kinds of matrices: the subgroup of permutation matrices, which we may identify with $S_4$, as well as the `antipermutations`
\[ \bar S_4=\{\begin{pmatrix}1&1&1&1\\1&1&1&1\\1&1&1&1\\1&1&1&1\end{pmatrix} - \pi : \pi \in S_4 \}, \]
so that $G_4(2)=S_4 \cup \bar S_4$. This shows that the commutant of $\Cliff(n,p)^{\ot 4}$ is in general larger than the span of the permutation representation -- even in the case of qubits --, and confirms that in general multiqubit Clifford unitaries do \emph{not} form a $4$-design~\cite{zhu2015multiqubit}.

\section{Detailed derivation of the GHZ bound}\label{app:ghz}

In this section we give a detailed derivation of \cref{thm:main} which bounds the average number of GHZ states that can be extracted from a random stabilizer network.
As in the main text, let $\zeta$ denote the cyclic permutation $1\mapsto2\mapsto3$, so that
\[ \tr (\Psi_{AB}^{T_B})^3 = \tr \Psi^{\ot 3} R_A(\zeta) R_B(\zeta^{-1}). \]
Here, $R_X(T) = r(T)^{\ot X}$ is a representation of an element $T \in G_3(p)$ on the three-fold copy of the Hilbert space corresponding to some subsystem $X$, where we recall that $G_3(p)$ contains the permutations group $S_3$ as a subgroup.
Explicitly, $1 = T_{0,\text{even}}$, $\zeta=T_{-1,\text{even}}$ and $\zeta^{-1}=T_{p,\text{even}}$, as is apparent from~\eqref{eq:concrete elements}.
Using our formula~\eqref{eq:third} for the third moment of a random stabilizer state, we obtain that
\[
  \Braket{\tr (\Psi_{AB}^{T_B})^3}
= \tr \Braket{\Psi^{\ot 3}} R_A(\zeta) R_B(\zeta^{-1})
= \frac 1 {\prod_{x\in V_b} D_x(D_x+1)(D_x+p)} \tr \left(\prod_e \ket e\!\!\bra e^{\ot 3}\right) \left( \prod_{x\in V_b} \sum_T R_x(T)\right) R_A(\zeta) R_B(\zeta^{-1}).
\]
Multiplying out the right-hand product, we find that the above is in turn equal to
\[ \frac 1 {\prod_{x\in V_b} D_x(D_x+1)(D_x+p)} \sum_{\{T_x\}} \tr \left(\prod_e \ket e\!\!\bra e^{\ot 3}\right) \left( \prod_{x\in V} R_x(T_x) \right) \]
where we sum over all assignments $T_x \in G_3(p)$, subject to the boundary conditions that $T_x = \zeta$ for $x\in A$, $T_x = \zeta^{-1}$ for $x\in B$, and $T_x=1$ for $x\in C$.
Now recall that the vertex Hilbert space is a tensor product $\bigotimes_e (\CC^p)^{\ot N}$, where $e$ runs over the edges incident to $x$, and that the representation $R_x(T_x)$ factors correspondingly.
Writing $R_x(T_x) = \bigotimes_e R_{x,e}(T_x)$, we can evaluate the trace edge by edge:
\[ \frac 1 {\prod_{x\in V_b} D_x(D_x+1)(D_x+p)} \sum_{\{T_x\}} \prod_{e=\braket{xy} \in E} \tr \ket e\!\!\bra e^{\ot 3} R_{x,e}(T_x) R_{y,e}(T_y) \]
Any maximally entangled state $\ket{\Phi^+}_{AB}$ satisfies the identity $(X \ot I) \ket{\Phi^+}_{AB} = (I \ot X^t) \ket{\Phi^+}_{AB}$, where $X^t$ denotes the transpose (in the computational basis, i.e., the basis that the maximally entangled state was defined in).
Since $\ket e^{\ot 3}$ is a maximally entangled state on two copies of $(\CC^p)^{\ot 3N}$, we obtain that
\[
  \tr \ket e\!\!\bra e^{\ot 3} R_{x,e}(T_x) R_{y,e}(T_y)
= \frac 1 {p^{3N}} \tr R(T_x) R(T_y)^t
= \frac 1 {p^{3N}} \tr R(T_x) R(T_y)^\dagger
\]
where we write $R(T)$ for the representation of $G_3(p)$ on the three-fold tensor power of $(\CC^p)^{\ot N}$;
the second inequality holds as $R(T)$ is represented by real orthogonal matrices in the computational basis.
According to \cref{eq:inner product}, the right-hand side is given by $p^{-N d(T_x, T_y)}$ and thus we obtain the following fundamental bound:
\begin{equation}
\label{eq:careful}
  \Braket{\tr (\Psi_{AB}^{T_B})^3} \leq p^{-3N_b} \sum_{\{T_x\}} p^{-N \sum_{\braket{xy}} d(T_x,T_y)}
\end{equation}
where the sum is over all choices of $T_x \in G_3(p)$ such that $T_x = \zeta$ in $A$, $T_x = \zeta^{-1}$ in $B$, and $T_x=1$ in $C$.
We note that~\eqref{eq:careful} reduces to~\eqref{eq:careful qubits} in the case of qubits ($p=2)$.

To analyze~\eqref{eq:careful}, we define the \emph{energy} of a configuration by $E[\{T_x\}] := \sum_{\braket{xy}} d(T_x,T_y)$ (cf.\ the main text for a justification of this terminology).
We first consider an arbitrary configuration $\{T_x\}$.
If we denote by $V_A = \{ x : T_x = \zeta \}$ the domain where $T_x$ is assigned the value $\zeta$ then the boundary conditions imply that $V_A\cap V_\partial=A$; that is, $V_A$ is a cut separating $A$ and $\bar A = BC$.
Likewise, the $\zeta^{-1}$-domain $V_B$ is a cut for $B$ and the identity domain $V_C$ a cut for $C$.
These cuts are not necessarily minimal, and so we have that $\lvert\partial V_A\rvert \geq S_{RT}(A)/N$ etc.
Lastly, we write $V' = V_b\setminus (V_A\cup V_B\cup V_C)$ for the remaining vertices and decompose the set of edges into (i) the set of edges $E_1$ that connect any of the domains $V_A$, $V_B$ or $V_C$ with $V'$, (ii) the set of edges $E_2$ that go between any two of the domains $V_A$,$V_B$, and $V_C$, and (iii) the remaining edges $E'$.
We can now lower-bound the energy of the configuration as follows:
\begin{align*}
  E[\{T_x\}]
= \sum_{\braket{xy}\in E_1} d(T_x,T_y) + \sum_{\braket{xy}\in E_2} d(T_x,T_y) + \sum_{\braket{xy}\in E'} d(T_x,T_y)
\geq \lvert E_1 \rvert + 2 \lvert E_2 \rvert
\end{align*}
Indeed, the edges $\braket{xy} \in E_1$ are by definition such that $T_x\neq T_y$, hence $d(T_x,T_y)\geq1$, for the edges in $E_2$ we in addition know that $T_x$ and $T_y$ are even, so that $T_x T_y^{-1}$ is even and hence $d(T_x,T_y)\geq2$.
Furthermore, it is clear that
\[ \lvert E_1 \rvert + 2 \lvert E_2 \rvert = \lvert\partial V_A\rvert + \lvert\partial V_B\rvert + \lvert\partial V_C\rvert \]
since the right-hand side double-counts precisely those edges in $E_2$.
Together, we find that
\[ E[\{T_x\}] \geq E_0 := \bigl( S_{RT}(A)+S_{RT}(B)+S_{RT}(C) \bigr)/N. \]
Moreover, equality holds if and only if the domains $V_A$, $V_B$ and $V_C$ are disjoint minimal cuts for $A$, $B$ and $C$, respectively, and if each connected components of $V'$ is assigned an odd element of $G_3(p)$.
It follows from \cref{lem:disjoint} below that it is always possible to find disjoint minimal cuts for disjoint boundary regions; hence $E_0$ is achievable.
Moreover, if we denote the number of minimal cuts for a boundary region $A$ by $\#_A$ and the maximal number of connected components of any subgraph $V'$ obtained by removing minimal cuts by $\#_b$, then we find that there are at most $\# = (p+1)^{\#_b} \#_A \#_B \#_C$ many configurations of energy $E_0$, for there are $p+1$ odd elements in $G_3(p)$.
All other configurations have higher energy and hence are penalized by a factor of at least $1/p^N$ in~\eqref{eq:careful}.
Thus we obtain the upper bound:
\[
  \Braket{\tr (\Psi_{AB}^{T_B})^3}
\leq p^{-3N_b} p^{-NE_0} \left( \# + \delta \right)
= p^{-3N_b-\left( S_{RT}(A)+S_{RT}(B)+S_{RT}(C) \right)} \left( \# + \delta \right)
\]
where $\delta = (2p+2)^{V_b} / p^N$, since there are no more than $\lvert G_3(p) \rvert^{V_b} = (2p+2)^{V_b}$ non-minimal configurations, and hence
\begin{equation}
\label{eq:log moment upper bound}
\log_p \Braket{\tr (\Psi_{AB}^{T_B})^3}
\leq -3N_b - \bigl( S_{RT}(A)+S_{RT}(B)+S_{RT}(C) \bigr) + \log_p \# + 2 \delta.
\end{equation}
At last we can bound the average number of GHZ states that can be extracted from a random stabilizer network state.
Using~\eqref{eq:ghz no} and $\rho=\Psi/\tr\Psi$, we obtain that
\begin{align*}
  \braket{g}_{\neq 0}
&\leq S_{RT}(A) + S_{RT}(B) + S_{RT}(C) + \log_p \Braket{\tr (\Psi_{AB}^{T_B})^3}_{\neq0} - 3 \Braket{\log_p \tr\Psi}_{\neq0} \\
&\leq S_{RT}(A) + S_{RT}(B) + S_{RT}(C) + \log_p \Braket{\tr (\Psi_{AB}^{T_B})^3} +2\delta + 3 N_b \\
&\leq \log_p \# + 4\delta
\end{align*}
where the first inequality uses $S(X)\leq S_{RT}(X)$ and concavity of the logarithm, the second that $\Pr(\Psi\neq0)\geq1-\delta$ (\eqref{eq:prob trace nz} in \cref{app:rt}), $\tr\Psi\geq1/p^{N_b}$ if $\Psi\neq0$ (\cref{app:trace}) and that $\delta$ is sufficiently small, and the last is obtained by plugging in~\eqref{eq:log moment upper bound}.
This is the statement of \cref{thm:main}.

\begin{lem}
\label{lem:disjoint}
  Let $A$ and $B$ be denote disjoint subsets of $V_\partial$, $V_A$ and $V_B$ minimal cuts for $A$ and $B$, respectively, and $V_0:=V_A\cap V_B$.
  Then either $V_A\setminus V_0$ is a minimal cut for $A$ or $V_B\setminus V_0$ is a minimal cut for $B$.
\end{lem}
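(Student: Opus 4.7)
The plan is to combine the minimality hypotheses with a posi-modularity property of the cut function $V\mapsto\lvert\partial V\rvert$. As a preliminary observation, since $A$ and $B$ are disjoint subsets of $V_\partial$, the intersection $V_0=V_A\cap V_B$ contains no boundary vertex. Consequently $(V_A\setminus V_0)\cap V_\partial = A$ and $(V_B\setminus V_0)\cap V_\partial = B$, so $V_A\setminus V_0$ and $V_B\setminus V_0$ are legitimate candidates for cuts separating $A$ and $B$ from their complements in $V_\partial$. The minimality of $V_A$ and $V_B$ then yields $\lvert\partial V_A\rvert\leq\lvert\partial(V_A\setminus V_0)\rvert$ and $\lvert\partial V_B\rvert\leq\lvert\partial(V_B\setminus V_0)\rvert$.

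The main ingredient is the inequality
\begin{equation}
\lvert\partial(V_A\setminus V_B)\rvert + \lvert\partial(V_B\setminus V_A)\rvert \,\leq\, \lvert\partial V_A\rvert + \lvert\partial V_B\rvert. \label{eq:posimod-proposal}
\end{equation}
I would prove this by classifying each edge according to which of the four cells $V_A\cap V_B$, $V_A\setminus V_B$, $V_B\setminus V_A$, $(V_A\cup V_B)^c$ its two endpoints lie in, and checking case by case that its contribution to the left-hand side never exceeds its contribution to the right-hand side. Every edge type in fact gives an equal contribution on the two sides, except for an edge between $V_A\cap V_B$ and $(V_A\cup V_B)^c$, which contributes $0$ to the left and $2$ to the right; this accounts for the slack.

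Combining~\eqref{eq:posimod-proposal} with the two minimality inequalities from the first paragraph forces equality throughout. In particular $\lvert\partial(V_A\setminus V_0)\rvert=\lvert\partial V_A\rvert$ and $\lvert\partial(V_B\setminus V_0)\rvert=\lvert\partial V_B\rvert$, so \emph{both} $V_A\setminus V_0$ and $V_B\setminus V_0$ are in fact minimal cuts --- a conclusion strictly stronger than the disjunction stated in the lemma. I do not anticipate any serious obstacle: the only moderately fiddly step is the case analysis for~\eqref{eq:posimod-proposal}, which is entirely mechanical. This stronger version is exactly what is needed in \cref{app:ghz} to iteratively produce pairwise disjoint minimal cuts for $A$, $B$, and $C$ and thereby realize the minimal energy $E_0$.
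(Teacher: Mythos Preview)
Your argument is correct and hinges on the same inequality the paper uses, namely the posi-modularity $\lvert\partial(V_A\setminus V_B)\rvert+\lvert\partial(V_B\setminus V_A)\rvert\leq\lvert\partial V_A\rvert+\lvert\partial V_B\rvert$. The paper obtains this inequality by citing the well-known symmetry and submodularity of the cut function (posi-modularity being an immediate consequence), whereas you verify it directly by the edge-type case analysis; both are standard. The substantive difference is that you then feed in the two minimality inequalities $\lvert\partial V_A\rvert\leq\lvert\partial(V_A\setminus V_0)\rvert$ and $\lvert\partial V_B\rvert\leq\lvert\partial(V_B\setminus V_0)\rvert$ to force equality everywhere, concluding that \emph{both} differences are minimal cuts. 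The paper stops at the disjunction ``either/or'', although its own argument could equally well have yielded your stronger statement. Your observation that the stronger conclusion is what is actually used in \cref{app:ghz}---one wants to make the three minimal cuts for $A$, $B$, $C$ simultaneously disjoint---is apt; with only the disjunction one would have to iterate slightly more carefully.
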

\begin{proof}
  Since $V_0\cap V_\partial=\emptyset$, it is clear that $V_A\setminus V_0$ is again a cut for $A$ and $V_B\setminus V_0$ again a cut for $B$.
  We now use that the cut function $c(W) := \lvert\partial W\rvert$ is symmetric and submodular, a fact that is well-known in graph theory.
  It follows that
  \[ \lvert\partial V_A\rvert + \lvert\partial V_B\rvert \geq \lvert\partial(V_A \setminus V_0) \rvert + \lvert\partial(V_B \setminus V_0)\rvert, \]
  and hence that either $\lvert\partial V_A\rvert \geq \lvert\partial(V_A \setminus V_0)\rvert$ or $\lvert\partial V_B\rvert \geq\lvert\partial(V_B \setminus V_0)\rvert$.
  This implies the claim.
\end{proof}

\end{widetext}
\end{document}